\newtheorem{definition}{Definition}
\newtheorem{theorem}{Theorem}
\newtheorem{remark}{Remark}
\title{\LARGE \bf Cooperative Visual Convex Area Coverage using a Tessellation-free Strategy*}
\author{
Sotiris Papatheodorou$^{1}$ and Anthony Tzes$^{1}$%
\thanks{*This work has received funding from the European Union's Horizon 2020 Research and Innovation Programme under the Grant Agreement No.644128, AEROWORKS.}%
\thanks{$^{1}$ The authors are with New York University Abu Dhabi, United Arab Emirates; This work was partially completed while the authors were with University of Patras, Greece. Corresponding author's email: \small\tt{anthony.tzes@nyu.edu}}%
}
\begin{document}
\maketitle
\thispagestyle{empty}
\pagestyle{empty}

\begin{abstract}
The objective in this article is to develop a control strategy for coverage purposes of a convex region by a fleet of Mobile Aerial Agents (MAAs). Each MAA is equipped with a downward facing camera that senses a convex portion of the area while its altitude flight is constrained. Rather than relying on typical Voronoi-like tessellations of the area to be covered, a scheme focusing on the assignment to each MAA of certain parts of the mosaic of the current covered area is proposed. A gradient ascent algorithm is then employed to increase in a monotonic manner the covered area by the MAA-fleet. Simulation studies are offered to illustrate the effectiveness of the proposed scheme.
\end{abstract}

\begin{keywords}
	Cooperative Control, Autonomous Systems, Area Coverage, Robotic Camera Networks
\end{keywords}

\section{Introduction}
Area coverage over a planar region by ground agents has been studied extensively when the sensing patterns of the agents are circular \cite{Cortes_ESAIMCOCV05,Pimenta_CDC08}. The majority of these techniques is based on a Voronoi or similar partitioning~\cite{Cortes_TRA04,Stergiopoulos_IETCTA10,Arslan_ICRA2016} of the region of interest and the employed control techniques include distributed optimization, model predictive control~\cite{Mohseni_IEEESJ2014} and game theory~\cite{Ramaswamy_ACC2016} among others. There has also been significant work concerning arbitrary sensing patterns~\cite{Stergiopoulos_Automatica13,Kantaros_Automatica15,Panagou_IEEETACCNS2016} while avoiding the usage of Voronoi partitioning~\cite{Stergiopoulos_ICRA14,Bakolas_SCL2016}. Convex as well as non-convex regions of interest have been studied~\cite{Stergiopoulos_IEEETAC15,Alitappeh_SC2016}.

A multitude of algorithms has been developed for mapping by  MAAs~\cite{Renzaglia_IJRR12,Breitenmoser_IROS10,Thanou_ISCCSP14,Torres_ESA2016} relying mostly in Voronoi--based tessellations of the region or path planning. Extensive research has been conducted in the field of area monitoring by MAAs equipped with cameras~\cite{Schwager_ICRA2009,Schwager_IEEE2011}. In these pioneering research efforts, no constraints are imposed on the altitude of the MAAs and overlapping between their covered regions is considered advantageous compared to the same region sensed by a single MAA. There is also work on ground target detection and tracking~\cite{Hu2012_ICARCV} by a team of MAAs as well as the connectivity and energy consumption of MAA networks~\cite{Yanmaz_ICC2012,Messous_WCNC2016}.

This article examines the persistent coverage problem of a convex planar region by a network of MAAs. The MAAs are assumed to have downwards facing visual sensors with arbitrary convex sensing footprints. Thus, both the covered area and the coverage quality of that area are dependent on the altitude of each MAA. MAAs at higher altitudes cover more area but with a lower coverage quality compared to MAAs at lower altitudes. A partitioning scheme of the sensed region, similar to~\cite{Stergiopoulos_ICRA14}, is employed in order to derive a gradient--based distributed control law. This control law leads the network to a locally optimal configuration with respect to a combined coverage--quality criterion. This work is an extension of \cite{Papatheodorou2017_RAS} from circular sensing footprints to any convex sensing footprint. Algorithmic implementations of the proposed control scheme are provided as free and open--source (Matlab--based) code in an online repository.

The problem statement and the optimization criterion are formulated in Section \ref{sec:problem_statement}. The coverage quality function and its properties are presented in Section \ref{sec:quality} while the sensed space partitioning scheme is defined in Section \ref{sec:partitioning}. The distributed control law is derived in Section \ref{sec:control}. Simulation studies highlighting the efficiency of the proposed control law in comparison to other control schemes are provided in Section \ref{sec:simulations} followed by concluding remarks.

\section{Problem Statement}
\label{sec:problem_statement}
We assume a compact and convex region $\Omega \subset \mathbb{R}^2$ which is under surveillance. We also assume a team of $n$ MAAs, each positioned at $X_i = \left[ x_i~y_i~z_i \right]^T, ~i \in I_n$ and having an orientation $\theta_i, ~i \in I_n$ with respect to the yaw axis, where $I_n = \left\{ 1, \dots ,n \right\}$. In addition, we define the vector $q_i = [ x_i ~ y_i ]^T, ~q_i \in \Omega$, denoting the projection of each MAA on the ground. Each MAA can fly between two predefined minimum and maximum altitudes $z_i^{\min}$ and $z_i^{\max}$ respectively, with $z_i^{\min} < z_i^{\max}$, thus $z_i \in [z_i^{\min}, ~z_i^{\max}], ~i \in I_n$. It assumed that $z_i^{\min} > 0, ~\forall i \in I_n$, since setting the minimum altitude to zero could potentially lead to some MAAs crashing on the ground. The minimum altitude $z_i^{\min}$ ensures that the MAAs will fly above ground obstacles, whereas the maximum altitude $z_i^{\max}$ that they will remain within range of their base station.

Instead of using a complete dynamic model, such as the quadrotor helicopter dynamics in \cite{Alexis2011_CEP}, a simplified single integrator dynamic model is used. The MAAs are approximated by point masses, thus each one's kinodynamic model is
\begin{align}
	\nonumber
	\dot{q}_i &= u_{i,q}, ~~q_i \in \Omega, &~u_{i,q} \in \mathbb{R}^2,\\
	\nonumber
	\dot{z}_i &= u_{i,z}, ~~z_i \in [z_i^{\min} ~,~ z_i^{\max}], &~u_{i,z}\in \mathbb{R},\\
	\dot{\theta}_i &= \omega_i, ~~\theta_i \in \mathbb{R}, &~\omega_i \in \mathbb{R}.
	\label{eq:kinematics}
\end{align}
where $\left[u_{i,q}^T, ~u_{i,z}, ~\omega_i\right]^T$ is the corresponding control input for each MAA~(node). In the sequel, all MAAs are assumed to have common minimum $z^{\min}$ and maximum $z^{\max}$ altitudes, although the results produced could be generalized to account for different altitude constraints among MAAs.

Regarding the sensing performance of the MAAs, they are assumed to be equipped with identical, downwards facing visual sensors. Each sensor is able to cover a convex region on the ground. As the altitude of an MAA increases, the area of the region it surveys also increases, as shown in Figure \ref{fig:problem_statement}. Without loss of generality, we define a base sensing pattern $C^b$ as the region surveyed by an MAA positioned at $X_i = \left[ 0~0~z^{\min} \right]^T$ with an orientation $\theta_i = 0$. We assume that there exists a parametrization $\gamma(k)$ of the boundary $\partial C^b$ of this base sensing pattern
\begin{align}
	\partial C^b = \left\{ q \in \mathbb{R}^2 \colon q = \gamma(k), ~k \in K \right\}
	\label{eq:base_parametrization}
\end{align}
where the interval $K$ is dependent on the parametrization and
\begin{equation*}
	\gamma(k) = \left[\begin{array}{c} \gamma_x(k) \\ \gamma_y(k) \end{array}\right], ~k \in K
\end{equation*}
Thus the sensing pattern of an MAA located at $X_i = \left[ x_i~y_i~z_i \right]^T$ with an orientation $\theta_i$ can be derived by scaling according to its altitude, rotating around the origin, and finally translating the base sensing pattern as follows
\begin{align}
	C_i(X_i,\theta_i) = q_i + \textbf{R}(\theta_i) \frac{z_i }{z^{\min}} C^b
	\label{eq:sensing}
\end{align}
where $\textbf{R}(\cdot)$ is the $2\times2$ rotation matrix.
Additionally, we define the minimum base sensing pattern radius as
\begin{equation*}
	r_b^{\min} = \min \left\{ \parallel q \parallel, \forall q \in \partial C_b \right\}.
\end{equation*}
This radius is constant since the base sensing pattern is independent of the agents' states.

As the altitude of an MAA increases, the visual coverage quality of its sensed region decreases. We define a coverage quality function $f(z_i)\colon [z^{\min}, ~z^{\max}] \rightarrow [0, 1]$ which is dependent on the MAA's altitude constraints $z^{\min}$ and $z^{\max}$. This coverage quality is assumed to be uniform throughout the sensed region $C_i$. The greater the value of $f(z_i)$, the better the coverage quality of $C_i$. The exact definition and properties of this coverage quality function are presented in Section \ref{sec:quality}.

Each point $q \in \Omega$ is assigned an importance weight through a space density function $\phi \colon \Omega \rightarrow \mathbb{R}^+$, which expresses the a priori information regarding the region of interest.

Finally, we define the joint coverage-quality objective function
\begin{equation}
	\mathcal{H} \stackrel{\triangle}{=} \int_{\Omega} \max_{i \in I_n} f(z_i) ~\phi(q) ~dq.
	\label{eq:objective_initial}
\end{equation}

\begin{figure}[htb]
	\centering
	\includegraphics[width=0.4\textwidth]{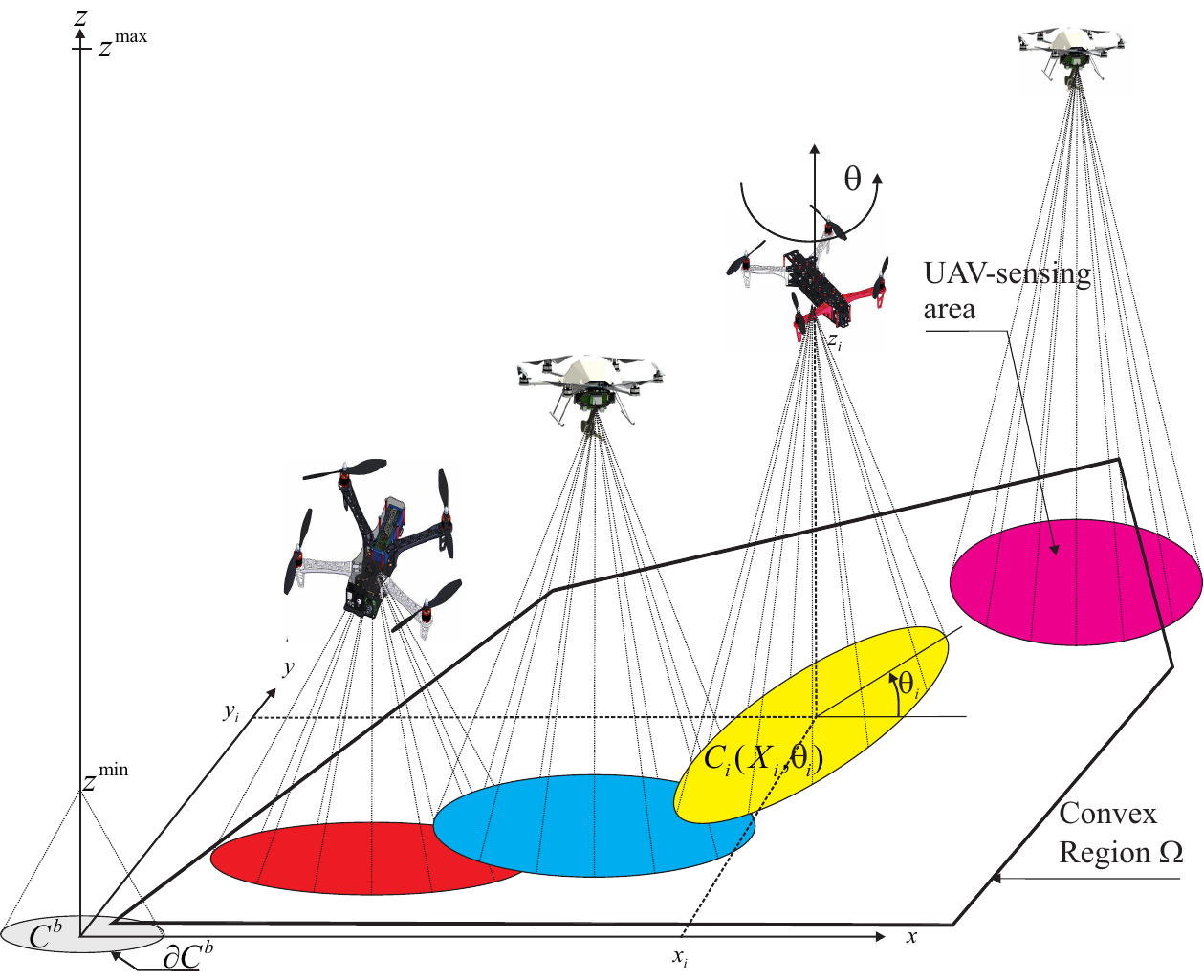}
	\caption{MAA-visual area coverage concept}
	\label{fig:problem_statement}
\end{figure}

\section{Coverage Quality Function}
\label{sec:quality}
A uniform coverage quality throughout each sensed region $C_i$ can be used to effectively model downward facing cameras~\cite{DiFranco_JIRS2016,Avellar_S2015} that provide uniform quality throughout the whole image. The uniform coverage quality function $f(z_i)\colon [z^{\min}, ~z^{\max}] \rightarrow [0, 1]$ was chosen to be
\begin{equation}
	f(z_i) = \left \{
	\begin{aligned}
		&~\frac{\left( \left( z_i - z^{\min} \right)^2 - \left( z^{\max} - z^{\min} \right)^2 \right)^2}{\left( z^{\max} - z^{\min} \right)^4}, & ~q \in C_i\\
		&~0, & ~q \notin C_i
	\end{aligned}
	\right.\\
	\label{eq:quality}
\end{equation}

A plot of this function can be seen in Figure \ref{fig:quality_plot} [Left]. This function was chosen so that $f(z^{\min}) = 1$, $f(z^{\max}) = 0$ and because $f(z_i)$ is first order differentiable with respect to $z_i$, or $\frac{\partial f(z_i)}{\partial z_i}$ exists within $C_i$, which is a property that will be required when deriving the control law in Section \ref{sec:control}.

\begin{figure}[htb]
	\centering
	\includegraphics[width=0.23\textwidth]{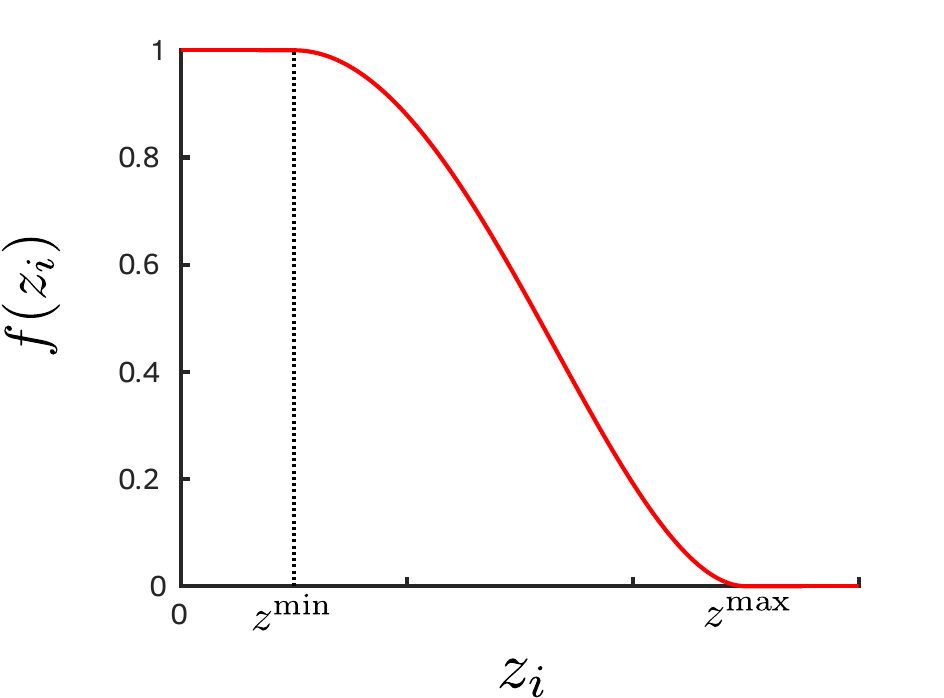}
	\includegraphics[width=0.23\textwidth]{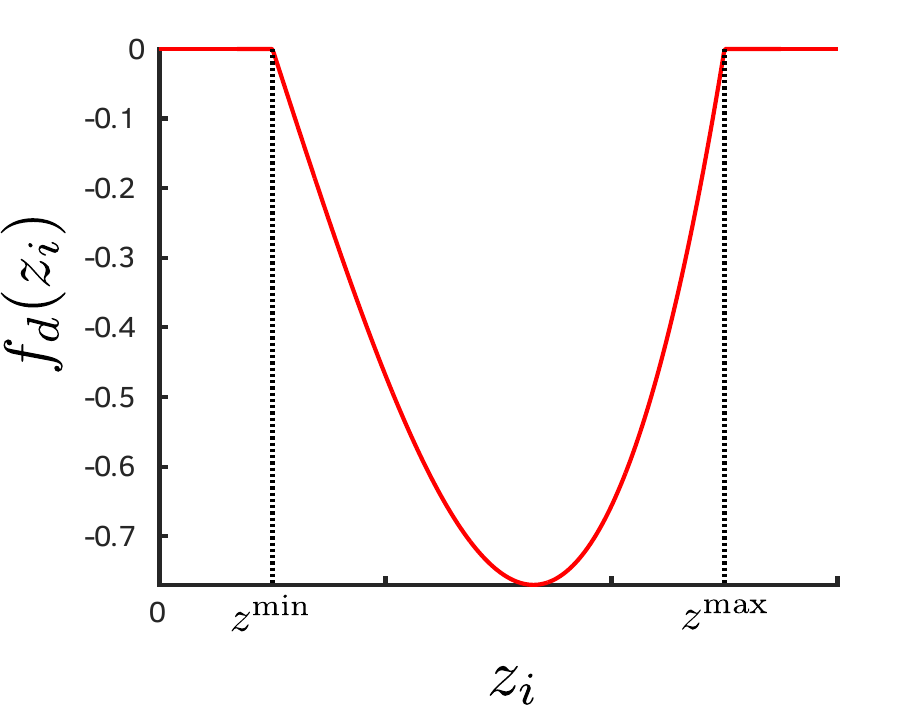}
	\caption{Uniform coverage quality function [Left] and its derivative [Right].}
	\label{fig:quality_plot}
\end{figure}

The derivative $f_d(z_i) \stackrel{\triangle}{=} \frac{\partial f(z_i)}{\partial z_i}\colon [z^{\min}, ~z^{\max}] \rightarrow [f_d^{\min}, 0]$ is evaluated as
\begin{footnotesize}
\begin{equation}
	f_d(z_i) = \left \{
	\begin{aligned}
		&~\frac{4 \left( z_i - z^{\min} \right) \left[ \left( z_i - z^{\min} \right)^2 - \left( z^{\max} - z^{\min} \right)^2 \right]}{\left( z^{\max} - z^{\min} \right)^4}, & ~q \in C_i\\
		&~0, & ~q \notin C_i
	\end{aligned}
	\right.\\
	\label{eq:quality_derivative}
\end{equation}
\end{footnotesize}
where $f_d^{\min} = f_d\left( z^{\min} + \frac{\sqrt{3}}{3} \left(z^{\max}-z^{min}\right) \right) = -\frac{8\sqrt{3}}{9\left(z^{\max}-z^{min}\right)}$. A plot of this function can be seen in Figure \ref{fig:quality_plot} [Right].

\section{Sensed Space Partitioning}
\label{sec:partitioning}
The assignment of responsibility regions to the nodes is achieved in a manner similar to \cite{Stergiopoulos_ICRA14}, where only the subset of $\Omega$ sensed by the nodes is partitioned. Each node is assigned a cell
\begin{equation}
	W_i \stackrel{\triangle}{=} \left\{q \in \Omega \colon f(z_i) > f(z_j), ~j \neq i \right\}, i \in I_n.
	\label{eq:partitioning}
\end{equation}
These cells do not comprise a complete tessellation of the sensed region because they do not include regions that are sensed by multiple nodes with the same coverage quality. We define the set
\begin{equation*}
	I^l = \left\{ i,j \in I_n, ~i \neq j \colon~ C_i \cap C_j \neq \emptyset ~\wedge~ f(z_i) = f(z_j) = f^l \right\}
\end{equation*}
which contains all the nodes with overlapping sensed regions and common coverage quality $f^l$. In general, there can be multiple such regions, each with a different coverage quality. We denote the number of such regions $L$. These regions are not assigned to any node and are defined as
\begin{equation}
	W_c^l \stackrel{\triangle}{=} \left\{\exists~ i, j \in I^l, ~i \neq j \colon~ q
	\in C_i \cap C_j \right\}, ~l \in I_L.
\end{equation}

Because the coverage quality is uniform, $\partial W_j \cap \partial W_i$ is
either an arc of $\partial C_i$ if $f(z_i) > f(z_j)$ or an arc of $\partial
C_j$ if $f(z_i) < f(z_j)$. If the sensing region of a node $i$ is contained
within the sensing region of another node $j$, i.e. $C_i \cap C_j = C_i$, then
$W_i = C_i$ and $W_j = C_j \setminus C_i$. An example partitioning with all of
the aforementioned cases illustrated can be seen in Figure
\ref{fig:partitioning} [Left], where the boundaries of the sensing regions
$\partial C_i$ are in dashed red and the boundaries of the cells $\partial W_i$
in solid black. Nodes $1$ and $2$ are at the same altitude so their common
covered region $W_c^1$ shown in gray is left unassigned. The sensing region of
node $3$ contains the sensing region of node $4$ and nodes $5, 6$ and $7$
illustrate the general case.

By utilizing this partitioning scheme, the network's coverage performance can be written as
\begin{equation}
	\mathcal{H} = \sum_{i \in I_n} \int_{W_i} f(z_i) ~\phi(q) ~dq ~+ \sum_{l = 1}^L
	\int_{W_c^l} f^l ~\phi(q) ~dq.
	\label{eq:criterion}
\end{equation}

\begin{definition}
	We define the neighbors $N_i$ of node $i$ as
	\begin{equation}
		\label{eq:neighbors}
		N_i \stackrel{\triangle}{=} \left\{ j \neq i \colon C_j \cap C_i \neq \emptyset \right\}.
	\end{equation}
	The neighbors of node $i$ are those nodes that sense at least a part of the region that node $i$ senses. It is clear that, due to the partitioning scheme used, only the nodes in $N_i$ need to be considered when creating $W_i$.
\end{definition}

\begin{remark}
	The aforementioned partitioning is a complete tessellation of the sensed region $\bigcup_{i \in I_n} C_i$. However it is not a complete tessellation of $\Omega$. The neutral region not assigned by the partitioning scheme is denoted as $\mathcal{O} = \Omega \setminus \left(\underset{i \in I_n}{\bigcup} W_i ~\cup~ \overset{L}{\underset{l = 1}{\bigcup}} W_c^l\right)$.
\end{remark}

\begin{figure}[htb]
	\centering
	\includegraphics[width=0.25\textwidth]{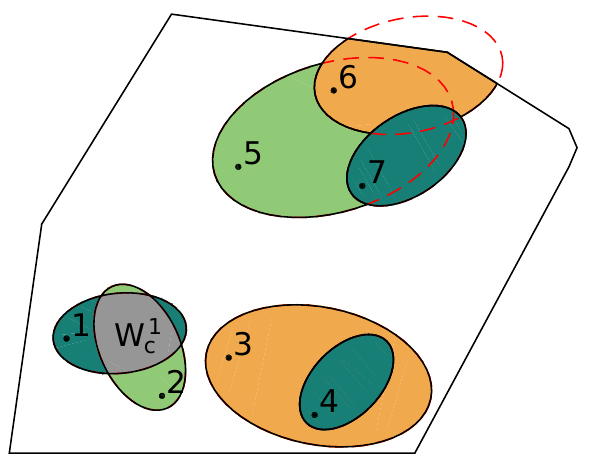}
	\caption{Sensed space partitioning examples.}
	\label{fig:partitioning}
\end{figure}

\section{Spatially Distributed Coordination Algorithm}
\label{sec:control}
Based on the nodes kinodynamics (\ref{eq:kinematics}), their sensing performance (\ref{eq:sensing}) and the coverage criterion (\ref{eq:criterion}), a gradient based control law is designed. The control law utilizes the partitioning scheme (\ref{eq:partitioning}) and results in monotonic increase of the joint coverage--quality criterion $\mathcal{H}$.

\begin{theorem}
In an MAA visual network consisting of nodes governed by the kinematics in (\ref{eq:kinematics}), sensing performance as in (\ref{eq:sensing}) and using the space partitioning (\ref{eq:partitioning}), the control law
\begin{small}
\begin{align}
	\nonumber
	\label{eq:control_xy}
	u_{i,q}
	&= ~\alpha_{i,q} \left[ ~\int\limits_{\partial W_i \cap \partial \mathcal{O}} \upsilon_i^i ~n_i ~f(z_i) ~\phi(q) ~dq ~+\right.\\
	&\qquad \left.\sum\limits_{j \in N_i} ~\int\limits_{\partial W_i \cap \partial W_j} \upsilon_i^i ~n_i ~\left(f(z_i) - f(z_j)\right) ~\phi(q) ~dq \right]
\end{align}
\begin{align}
	\nonumber
	\label{eq:control_z}
	u_{i,z}
	&= ~\alpha_{i,z} \left[ ~\int\limits_{W_i} f^d(z_i) ~\phi(q) ~dq ~+ \int\limits_{\partial W_i \cap \partial \mathcal{O}} \nu_i^i \cdot n_i ~f(z_i) ~\phi(q) ~dq ~+ \right.\\
	&\qquad \left. \sum\limits_{j \in N_i} ~\int\limits_{\partial W_i \cap \partial W_j} \nu_i^i \cdot n_i ~\left(f(z_i)-f(z_j)\right) ~\phi(q) ~dq \right]
\end{align}
\begin{align}
	\nonumber
	\label{eq:control_theta}
	\omega_i
	&= ~\alpha_{i,\theta} \left[ ~\int\limits_{\partial W_i \cap \partial \mathcal{O}} \tau_i^i \cdot n_i ~f(z_i) ~\phi(q) ~dq ~+ \right.\\
	&\qquad \left.\sum\limits_{j \in N_i} ~\int\limits_{\partial W_i \cap \partial W_j} \tau_i^i \cdot n_i ~\left(f(z_i)-f(z_j)\right) ~\phi(q) ~dq \right]
\end{align}
\end{small}
where $\alpha_{i,q}, ~\alpha_{i,z}, ~\alpha_{i,\theta}$ are positive constants, $\upsilon_i^i$, $\nu_i^i$ and $\tau_i^i$ are the Jacobian matrices of the points $q \in \partial W_i$ with respect to $q_i$, $z_i$ and $\theta_i$ as defined in equations (\ref{eq:jacobian_xy}), (\ref{eq:jacobian_z}) and (\ref{eq:jacobian_theta}) respectively and $n_i$ the outward pointing normal vector of $W_i$, maximizes the performance criterion (\ref{eq:criterion}) monotonically along the nodes' trajectories, leading in a locally optimal configuration.
\end{theorem}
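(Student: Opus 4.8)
The plan is to show that the three control inputs are each a positive multiple of the corresponding partial derivative of $\mathcal{H}$ with respect to that node's state, so the closed loop performs gradient ascent and $\dot{\mathcal{H}} = \sum_{i}\big[\alpha_{i,q}\|\partial\mathcal{H}/\partial q_i\|^2 + \alpha_{i,z}(\partial\mathcal{H}/\partial z_i)^2 + \alpha_{i,\theta}(\partial\mathcal{H}/\partial\theta_i)^2\big]\ge 0$, with equality exactly at critical (locally optimal) configurations. First I would write $\dot{\mathcal{H}} = \sum_{i}\big[(\partial\mathcal{H}/\partial q_i)^T\dot q_i + (\partial\mathcal{H}/\partial z_i)\dot z_i + (\partial\mathcal{H}/\partial\theta_i)\omega_i\big]$ by the chain rule, reducing the theorem to computing the three gradients of the decomposed criterion (\ref{eq:criterion}) and then substituting the proposed laws.

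The core computation differentiates integrals over the state-dependent cells $W_i$. I would use the Leibniz/Reynolds rule: for a domain $W(p)$ moving with a scalar $p$, $\frac{\partial}{\partial p}\int_{W(p)} g\,\phi\,dq = \int_{W(p)}\frac{\partial g}{\partial p}\phi\,dq + \int_{\partial W(p)} g\,\phi\,(\mathbf{v}_p\cdot n)\,dq$, where $\mathbf{v}_p$ is the boundary-point velocity — precisely the Jacobians $\upsilon_i^i,\nu_i^i,\tau_i^i$ of the points of $\partial W_i$ with respect to $q_i,z_i,\theta_i$. These exist and are smooth because the footprint (\ref{eq:sensing}) is an affine (scale, rotation, translation) image of $C^b$. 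For $q_i$ and $\theta_i$ the integrand $f(z_i)$ is independent of the variable, so only boundary terms survive; for $z_i$ the extra interior term $\int_{W_i} f_d(z_i)\phi\,dq$ appears from $\partial f/\partial z_i$, reproducing the first term of (\ref{eq:control_z}).

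The decisive step is the bookkeeping of $\partial W_i$, which I would split into the part on $\partial\Omega$ (whose velocity vanishes since $\Omega$ is fixed, hence no contribution), the free arcs $\partial W_i\cap\partial\mathcal{O}$, and the shared arcs $\partial W_i\cap\partial W_j$. On a free arc, bordering a neutral region that contributes $0$ to $\mathcal{H}$, the node gains or loses area worth $f(z_i)$, yielding the terms $\int_{\partial W_i\cap\partial\mathcal{O}} f(z_i)\,\upsilon_i^i\, n_i\,\phi\,dq$ and their $\nu_i^i,\tau_i^i$ analogues. On a shared arc the key is reciprocity: by the partitioning, such an arc lies on $\partial C_i$ exactly when $f(z_i)>f(z_j)$, in which case moving node $i$ advances the boundary into $W_j$; the derivative of node $i$'s own integral contributes $f(z_i)(\mathbf{v}\cdot n_i)$, while that of the neighbor's integral $\int_{W_j} f(z_j)\phi$ contributes $f(z_j)(\mathbf{v}\cdot(-n_i))$ because $W_j$ carries the opposite outward normal there; summing gives the weight $(f(z_i)-f(z_j))$. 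When instead $f(z_i)<f(z_j)$ the arc lies on $\partial C_j$, the Jacobian $\upsilon_i^i$ (resp. $\nu_i^i,\tau_i^i$) vanishes on it, and the same formula holds vacuously, so the sum over $N_i$ in (\ref{eq:control_xy})--(\ref{eq:control_theta}) is exactly the gradient. Only neighbors in $N_i$ appear, since disjoint footprints are unaffected by an infinitesimal motion of node $i$.

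I expect the main obstacle to be twofold. First, the nonsmoothness of $\max_i f(z_i)$: the criterion fails to be differentiable precisely on the common-quality regions $W_c^l$ where two footprints overlap with equal quality, and these also inject an interior $z_i$-term absent from the control law. I would argue that exact ties form a measure-zero set of configurations (generic position), so the gradient is well defined and $\dot{\mathcal{H}}\ge0$ holds for almost every $t$, or alternatively pass to one-sided/Clarke generalized gradients. Second, rigorously justifying the reciprocal cancellation on shared boundaries together with the vanishing of the $\partial\Omega$ and common-region contributions is where the algebra is most delicate and is the heart of identifying the control law with $\nabla\mathcal{H}$. Monotonicity and local optimality of the equilibria then follow immediately from the gradient-ascent structure and a LaSalle-type invariance argument.
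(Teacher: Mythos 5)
Your proposal is correct and follows essentially the same route as the paper: a gradient-ascent law, the Leibniz rule applied to the partitioned criterion (\ref{eq:criterion}), a decomposition of $\partial W_i$ into its $\partial\Omega$, $\partial\mathcal{O}$, and shared-arc pieces with the $\partial\Omega$ contribution vanishing, and the reciprocity $\upsilon_j^i=\upsilon_i^i$, $n_j=-n_i$ on shared arcs producing the $\left(f(z_i)-f(z_j)\right)$ weights. The only divergence is in the equal-quality overlap regions $W_c^l$, which you set aside as non-generic (or relegate to Clarke gradients), whereas the paper keeps $\partial W_i\cap\partial W_c^l$ as a fourth piece of its boundary decomposition and disposes of it by noting that those arcs lie on some other node's $\partial C_j$, so $\upsilon_i^i$ vanishes there.
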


\begin{proof}
We are interested in a gradient--based control law in the form
\begin{equation*}
	u_{i,q} = \alpha_{i,q} \frac{\partial\mathcal{H}}{\partial q_i}, ~~ u_{i,z} = \alpha_{i,z} \frac{\partial\mathcal{H}}{\partial z_i}, ~~ \omega_i = \alpha_{i,\theta} \frac{\partial\mathcal{H}}{\partial \theta_i}
\end{equation*}
in order to guarantee monotonic increase of the optimization criterion $\mathcal{H}$.

We will begin by evaluating $\frac{\partial\mathcal{H}}{\partial q_i}$. Through usage of the Leibniz integral rule, the neighbor definition in \ref{eq:neighbors} and since $\frac{\partial f(z_i)}{\partial q_i} = \frac{\partial f(z_j)}{\partial q_i} = 0$, we obtain
\begin{small}
\begin{equation*}
	\frac{\partial\mathcal{H}}{\partial q_i}
	= ~\int\limits_{\partial W_i} \upsilon_i^i ~n_i ~f(z_i) ~\phi(q) ~dq ~+ \sum\limits_{j \in N_i}
	~\int\limits_{\partial W_i \cap \partial W_j} \upsilon_j^i ~n_j ~f(z_j) ~\phi(q) ~dq
\end{equation*}
\end{small}
where $\upsilon_j^i$ stands for the Jacobian matrix with respect to $q_i$ of the points $q \in \partial W_j$,
\begin{equation}
	\upsilon_j^i\left(q\right) \stackrel{\triangle}{=} \frac{\partial q}{\partial q_i},~~q\in \partial W_j,~i,j\in I_n
	\label{eq:jacobian_xy}
\end{equation}
and $n_i$ for the outwards unit normal vector on $\partial W_i$.

The boundary $\partial W_i$ can be decomposed into disjoint sets as
\begin{align}
	\nonumber
	\partial W_i &=
	\{\partial W_i \cap \partial \Omega \}
	\cup
	\{\partial W_i \cap \partial \mathcal{O} \}
	\cup \\
	&\{\bigcup_{j \neq i} \left( \partial W_i \cap \partial W_j \right)\}
	\cup
	\{\overset{L}{\underset{l = 1}{\bigcup}} \left( \partial W_i \cap \partial W_c^l \right)\}.
	\label{eq:boundary_decomposition}
\end{align}
These sets represent the parts of $\partial W_i$ that lie on the boundary of $\Omega$, the boundary of the node's sensing region, the parts that are common between the boundary of the cell of node $i$ and those of other nodes and finally the parts that lie on the boundary of some unassigned region covered by multiple nodes. This decomposition can be seen in Figure~\ref{fig:boundary_decomposition} with the sets $\partial W_i \cap \partial \Omega$, $\partial W_i \cap \partial \mathcal{O}$, $\bigcup_{j \neq i} \left( \partial W_i \cap \partial W_j \right)$ and $\overset{L}{\underset{l = 1}{\bigcup}} \left( \partial W_i \cap \partial W_c^l \right)$ appearing in solid purple, red, green and blue respectively.

\begin{figure}[htb]
	\centering
	\includegraphics[width=0.3\textwidth]{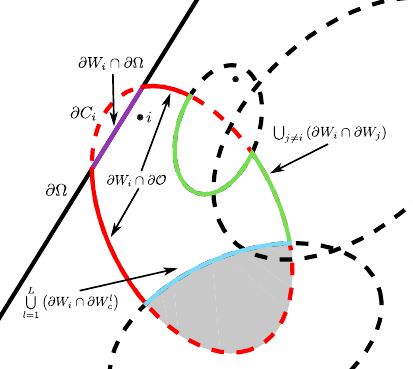}
	\caption{Decomposition of $\partial W_i$ into disjoint sets.}
	\label{fig:boundary_decomposition}
\end{figure}

At $q \in \partial \Omega$ it holds that $\upsilon_i^i = \textbf{0}_{2 \times 2}$ since we assume the region of interest is static. In addition, at $q \in \partial W_i \cap \partial W_c^l$ it holds that $\upsilon_i^i = \textbf{0}_{2 \times 2}$ since $\partial W_i \cap \partial W_c^l$ are arcs of some sensed region $\partial C_j$ and thus independent of the state of node $i$. Combining this with the boundary decomposition (\ref{eq:boundary_decomposition}) we obtain
\begin{align*}
	\frac{\partial\mathcal{H}}{\partial q_i}
	&= ~\int\limits_{\partial W_i \cap \partial \mathcal{O}} \upsilon_i^i ~n_i ~f(z_i) ~\phi(q) ~dq \\
	&+ \sum\limits_{j \in N_i} ~\int\limits_{\partial W_i \cap \partial W_j} \upsilon_i^i ~n_i ~f(z_i) ~\phi(q) ~dq \\
	&+ \sum\limits_{j \in N_i} ~\int\limits_{\partial W_j \cap \partial W_i} \upsilon_j^i ~n_j ~f(z_j) ~\phi(q) ~dq.
\end{align*}
Because the boundary $\partial W_i \cap \partial W_j$ is common among nodes $i$ and $j$, it holds that $\upsilon_j^i = \upsilon_i^i$ when evaluated over it and that  $n_j = -n_i$. Finally, the sums and the integrals within them can be combined and the final form of the planar control law is produced
\begin{align*}
	\frac{\partial\mathcal{H}}{\partial q_i}
	&= ~\int\limits_{\partial W_i \cap \partial \mathcal{O}} \upsilon_i^i ~n_i ~f(z_i) ~\phi(q) ~dq ~+\\
	&\qquad \sum\limits_{j \in N_i} ~\int\limits_{\partial W_i \cap \partial W_j} \upsilon_i^i ~n_i ~\left(f(z_i) - f(z_j)\right) ~\phi(q) ~dq.
\end{align*}

By following a similar procedure, defining
\begin{equation}
	\nu_j^i\left(q\right) \stackrel{\triangle}{=} \frac{\partial q}{\partial z_i},~~q\in \partial W_j,~i,j\in I_n
	\label{eq:jacobian_z}
\end{equation}
and
\begin{equation}
	\tau_j^i\left(q\right) \stackrel{\triangle}{=} \frac{\partial q}{\partial \theta_i},~~q\in \partial W_j,~i,j\in I_n,
	\label{eq:jacobian_theta}
\end{equation}
using the boundary decomposition (\ref{eq:boundary_decomposition}) and the fact that $\frac{\partial f(z_j)}{\partial z_i} = \frac{\partial f(z_i)}{\partial \theta_i} = \frac{\partial f(z_j)}{\partial \theta_i} = 0$ we obtain
\begin{small}
\begin{align*}
	\frac{\partial\mathcal{H}}{\partial z_i} &= \int\limits_{W_i} f^d(z_i) ~\phi(q) ~dq ~+ \int\limits_{\partial W_i \cap \partial \mathcal{O}} \nu_i^i \cdot n_i ~f(z_i) ~\phi(q) ~dq ~+ \\
	&\quad \sum\limits_{j \in N_i} \left[ ~\int\limits_{\partial W_i \cap \partial W_j} \nu_i^i \cdot n_i ~\left(f(z_i)-f(z_j)\right) ~\phi(q) ~dq \right]
\end{align*}
\end{small}
and
\begin{small}
\begin{align*}
	\frac{\partial\mathcal{H}}{\partial \theta_i} &= \int\limits_{\partial W_i \cap \partial \mathcal{O}} \tau_i^i \cdot n_i ~f(z_i) ~\phi(q) ~dq ~+ \\
	&\quad \sum\limits_{j \in N_i} \left[ ~\int\limits_{\partial W_i \cap \partial W_j} \tau_i^i \cdot n_i ~\left(f(z_i)-f(z_j)\right) ~\phi(q) ~dq \right]
\end{align*}
\end{small}

\end{proof}

\section{Simulation Studies}
\label{sec:simulations}
Simulation studies used to evaluate the efficiency of the proposed control
scheme are presented in this section. The region of interest $\Omega$ is chosen
to be the same as in \cite{Stergiopoulos_IETCTA10} for consistency. All nodes
are identical with an elliptical base sensing pattern and altitude constraints
$z^{\min} = 0.3$ and $z^{\max} = 2.3$. The nodes' initial states, excluding
their altitude, are the same as in \cite{Stergiopoulos_ICRA14}. The boundaries
of the nodes' cells are shown in solid black and the boundaries of their
sensing regions in dashed red lines. It should be noted that although for some
sensing patterns, such as circular ones, it is possible to derive the stable
altitude for a single MAA, this is not the case with elliptical sensing
patterns. This is due to the fact that there exists no closed--form expression
for the arc length of an ellipse. The simulations used in this section have
been implemented in Matlab and are available as free and open--source software
in
\url{https://git.sr.ht/~sotirisp/uav-coverage}.
 The code is written from the scope of a single agent, thus it can be easily
modified and used as a basis for a real world implementation.

\subsection{Case Study I}
In this simulation study the elliptical sensing patterns are approximated by their largest inscribed disks centered at $q_i$ with radii $r_b^{\min}$. In that case, only control laws (\ref{eq:control_xy}) and (\ref{eq:control_z}) are used, since the sensing patterns are rotation invariant. The resulting control scheme is the one examined in \cite{Papatheodorou2017_RAS}. Figure \ref{fig:states_traj} (top) shows the evolution of the network through time, with the real elliptical sensing patterns shown by dotted black lines. The coverage quality at the initial and final configurations of the network are shown in Figure \ref{fig:sim1_quality} while the value of the optimization criterion and the percentage of covered area are shown in Figure \ref{fig:objective_area} in blue.

\begin{figure}[htbp]
	\centering
	\includegraphics[width=0.23\textwidth]{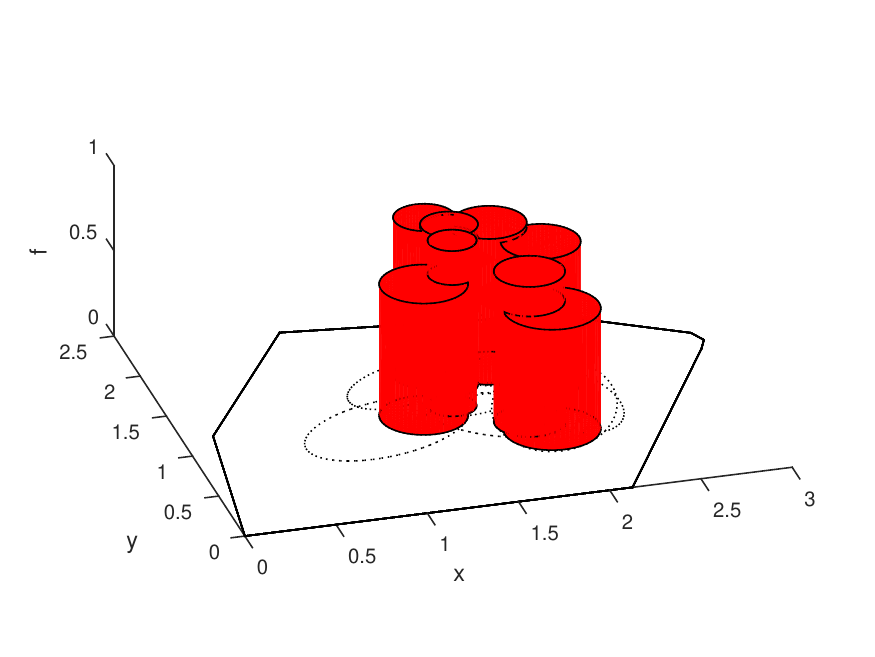}
	\includegraphics[width=0.23\textwidth]{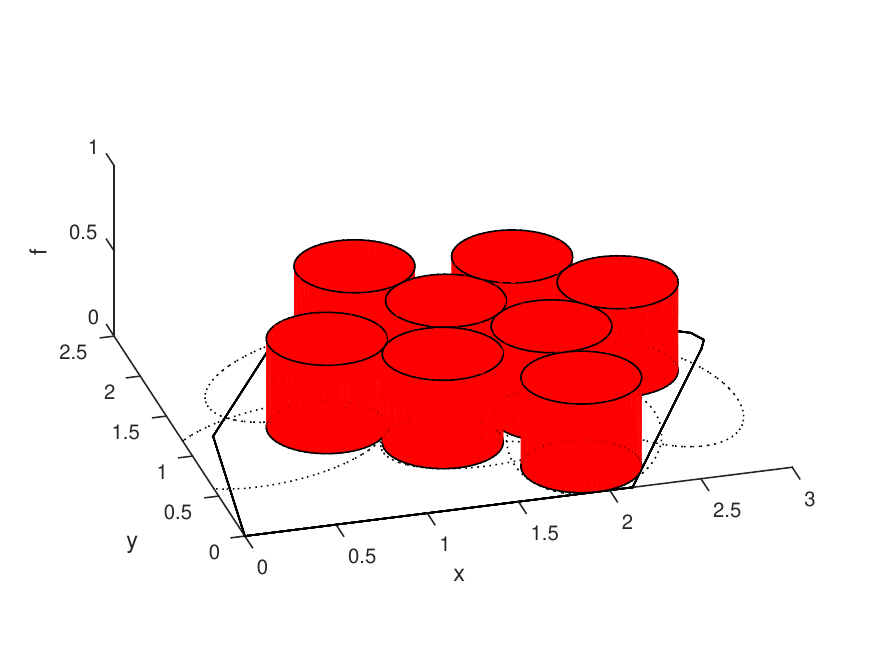}
	\caption{Case Study I: Initial [Left] and final coverage quality.}
	\label{fig:sim1_quality}
\end{figure}

\subsection{Case Study II}
In this simulation study the exact elliptical sensing patterns of the nodes are used, however their orientations remain fixed to their initial values. Thus, again only control laws (\ref{eq:control_xy}) and (\ref{eq:control_z}) are used. Figure \ref{fig:states_traj} (middle) shows the evolution of the network through time while the coverage quality at the initial and final configurations of the network are shown in Figure \ref{fig:sim2_quality}. The value of the optimization criterion and the percentage of covered area are shown in Figure \ref{fig:objective_area} in red.

\begin{figure}[htbp]
	\centering
	\includegraphics[width=0.23\textwidth]{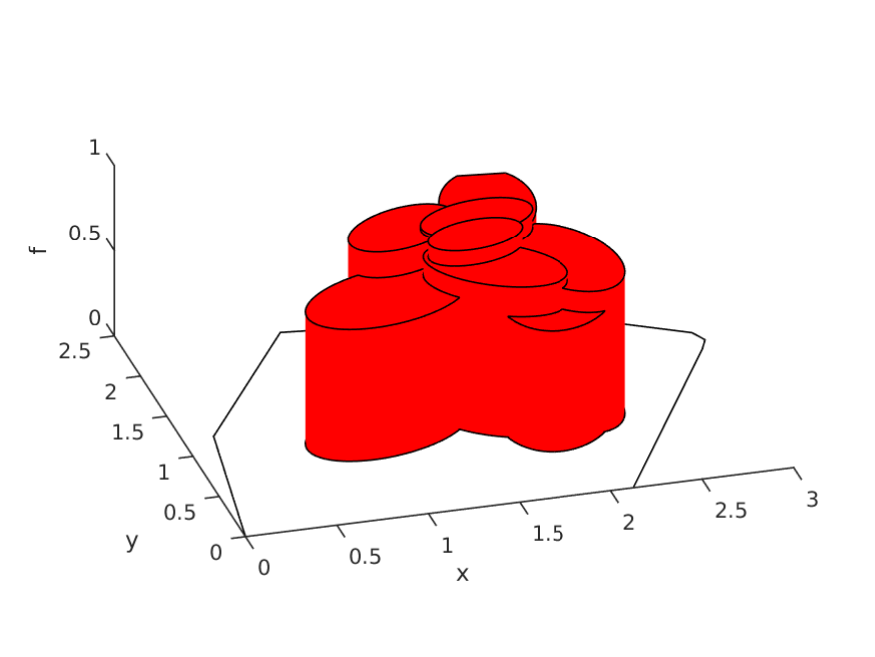}
	\includegraphics[width=0.23\textwidth]{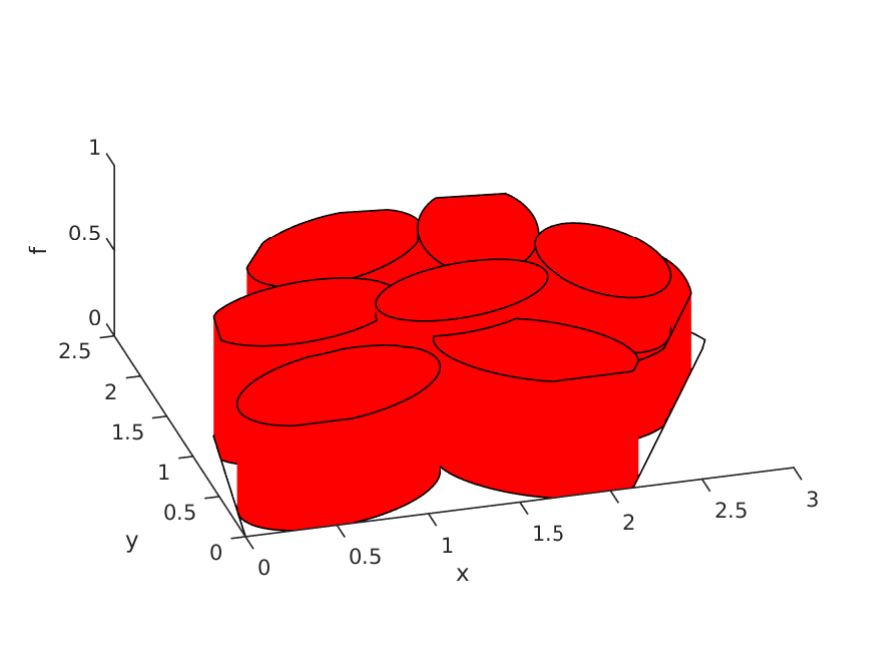}
	\caption{Case Study II: Initial [Left] and final coverage quality.}
	\label{fig:sim2_quality}
\end{figure}

\subsection{Case Study III}
In this final simulation study the complete control scheme (\ref{eq:control_xy}), (\ref{eq:control_z}) and (\ref{eq:control_theta}) is used in conjunction with the elliptical sensing patterns. Figure \ref{fig:states_traj} (middle) shows the evolution of the network through time while the coverage quality at the initial and final configurations of the network are shown in Figure \ref{fig:sim3_quality}. The value of the optimization criterion and the percentage of covered area are shown in Figure \ref{fig:objective_area} in green.

It is clear from Figure \ref{fig:objective_area} that a maximal inscribed circle approximation of the sensed region can severely limit the efficiency of the agents. Even when the agents are not allowed to rotate, the network's performance is significantly improved when the exact sensing patterns are used compared to their circular approximations. However, in such a case the agents are not always used to their full potential, as can be seen in Figure \ref{fig:states_traj} (middle) [Right] where the sensing region of an agent is completely contained within that of another.

\begin{figure}[htbp]
	\centering
	\includegraphics[width=0.23\textwidth]{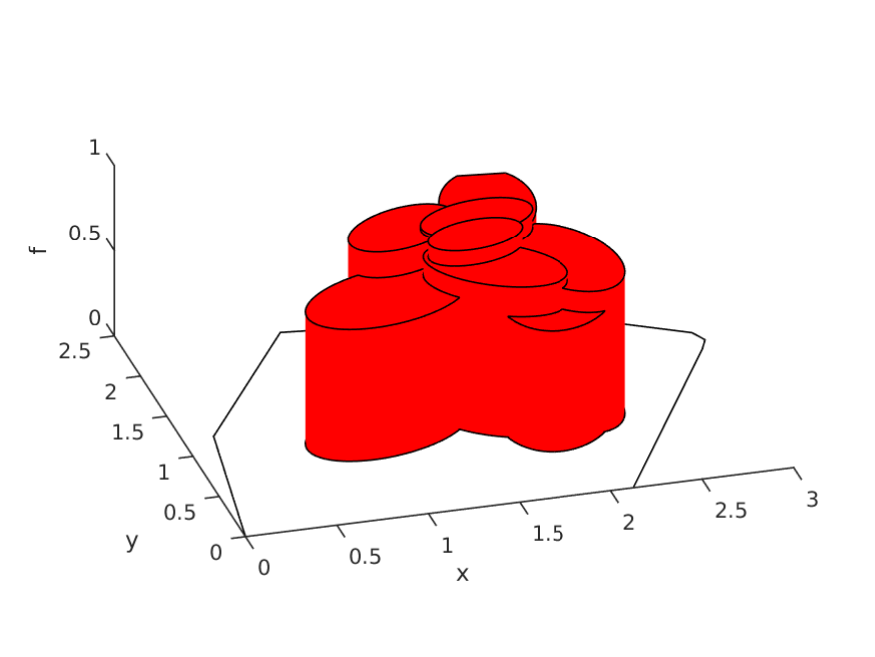}
	\includegraphics[width=0.23\textwidth]{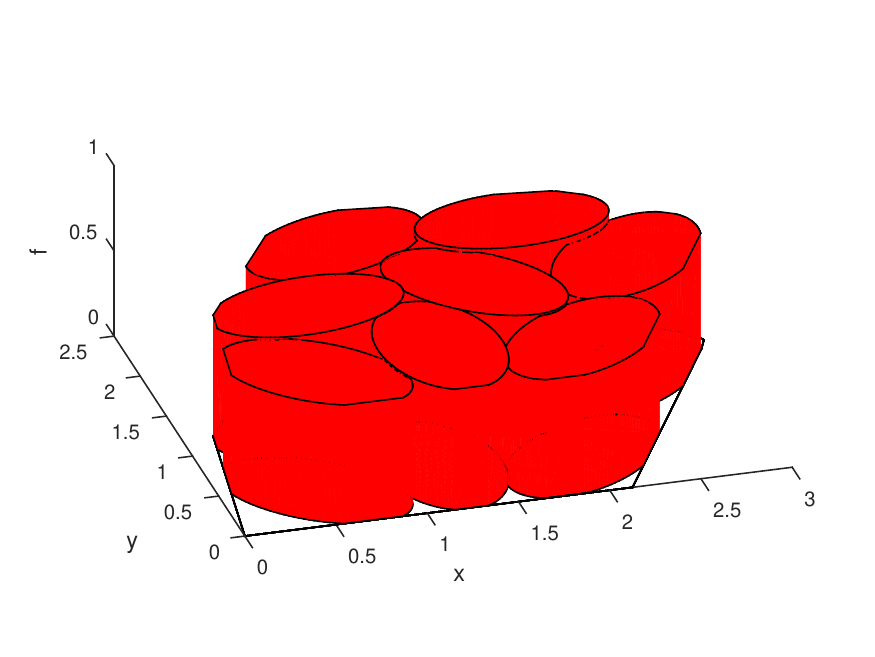}
	\caption{Case Study III: Initial [Left] and final coverage quality.}
	\label{fig:sim3_quality}
\end{figure}

\begin{figure*}[htbp]
	\centering
	\includegraphics[width=0.31\textwidth]{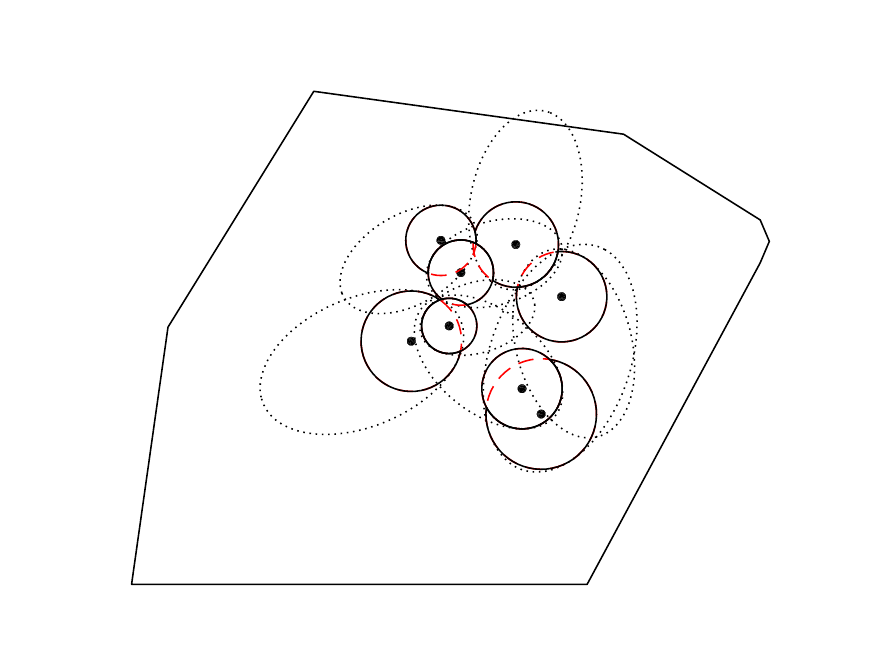}
	\includegraphics[width=0.31\textwidth]{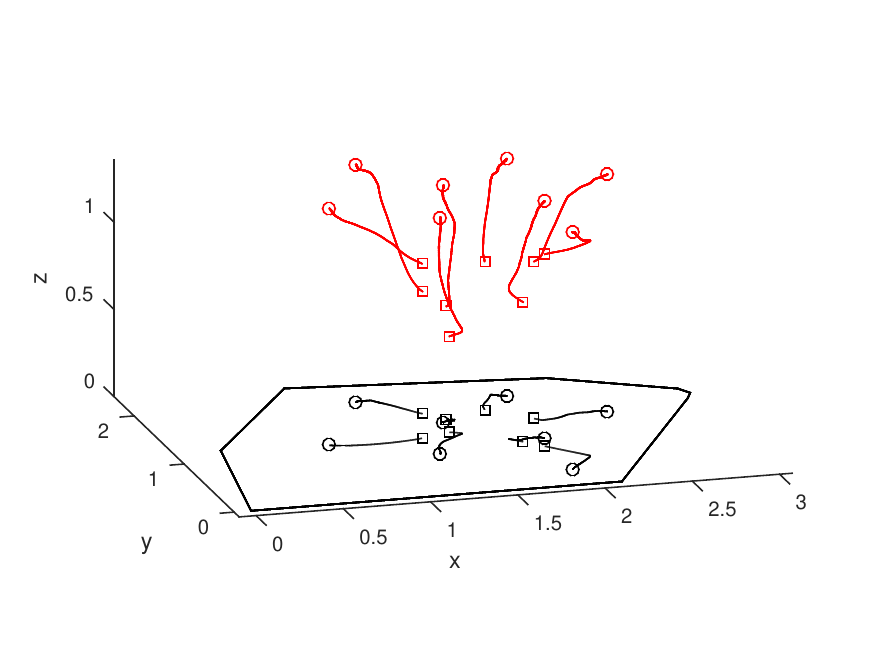}
	\includegraphics[width=0.31\textwidth]{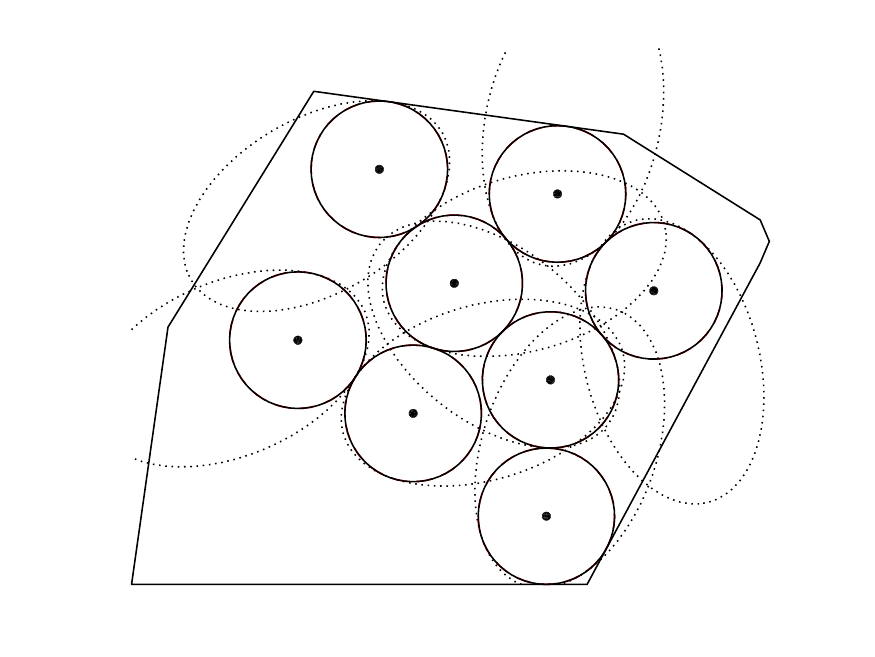} \\
	\includegraphics[width=0.31\textwidth]{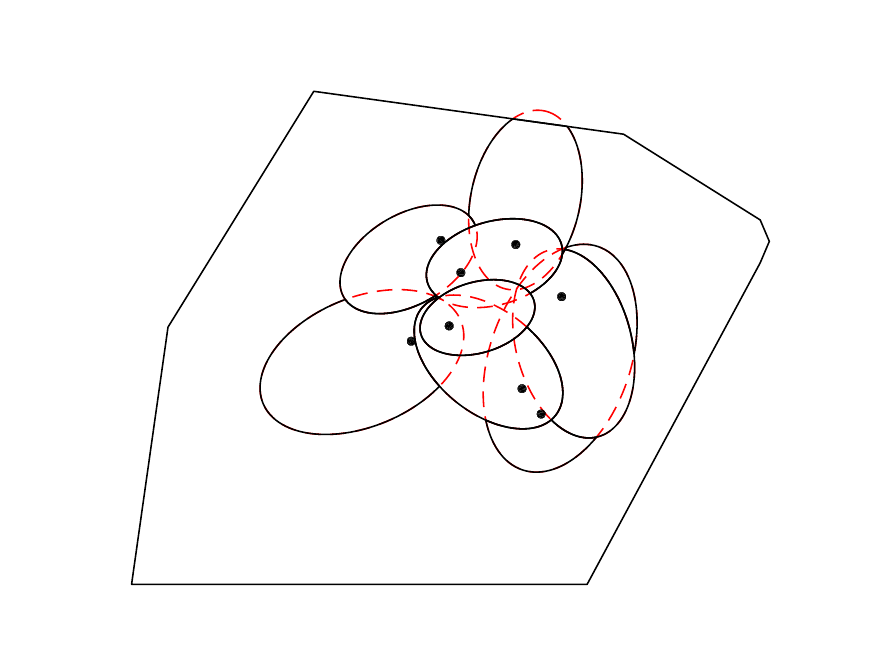}
	\includegraphics[width=0.31\textwidth]{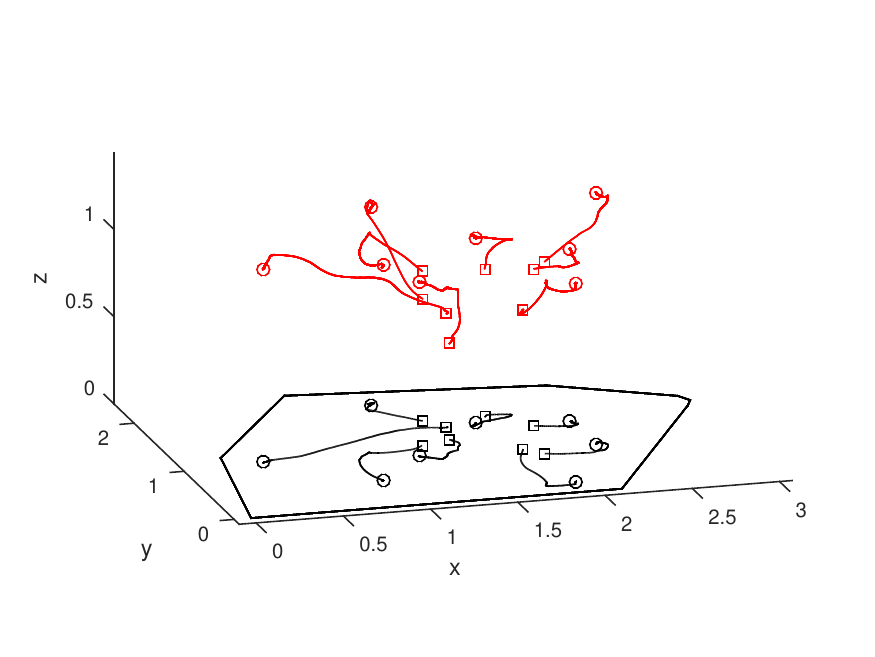}
	\includegraphics[width=0.31\textwidth]{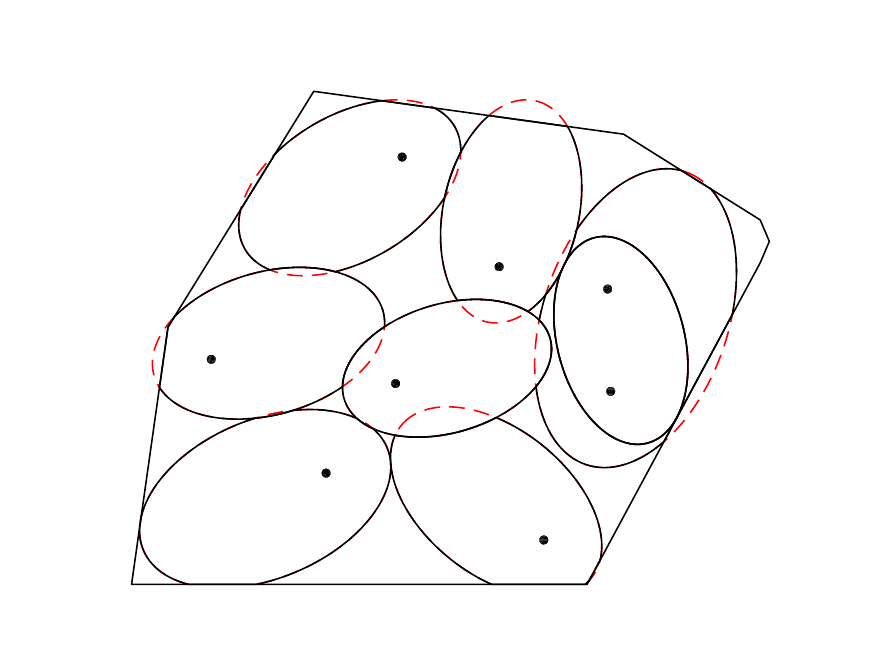} \\
	\includegraphics[width=0.31\textwidth]{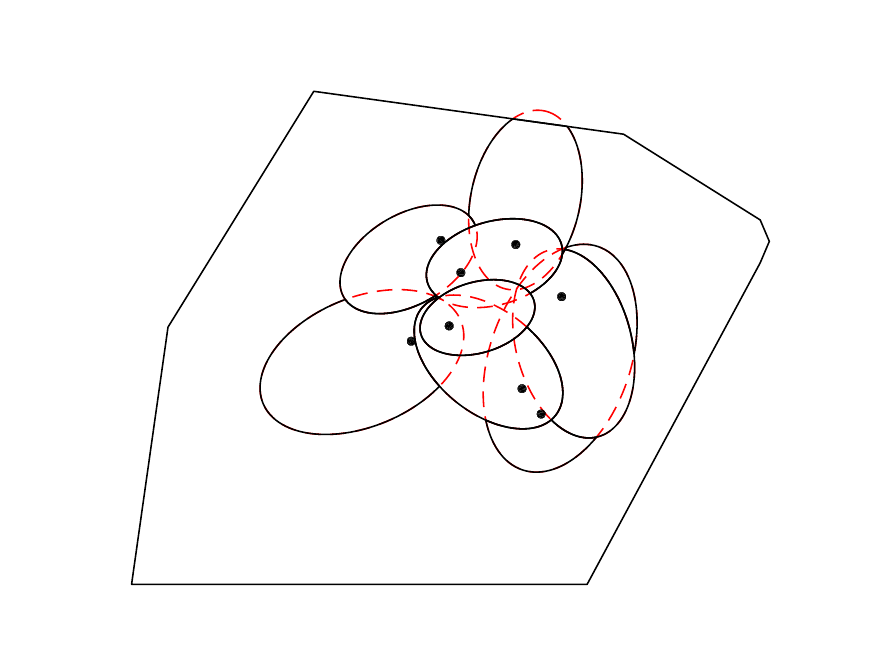}
	\includegraphics[width=0.31\textwidth]{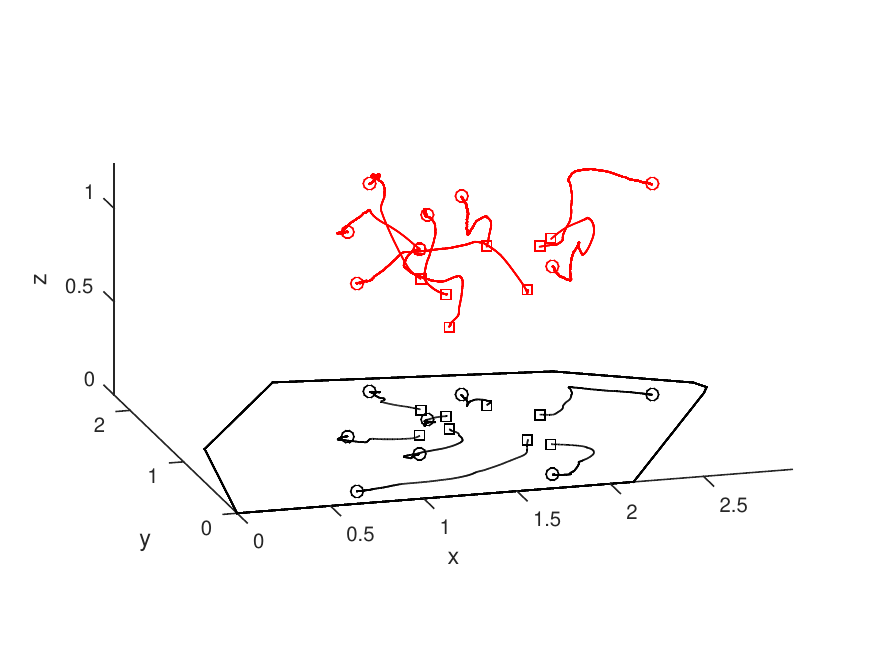}
	\includegraphics[width=0.31\textwidth]{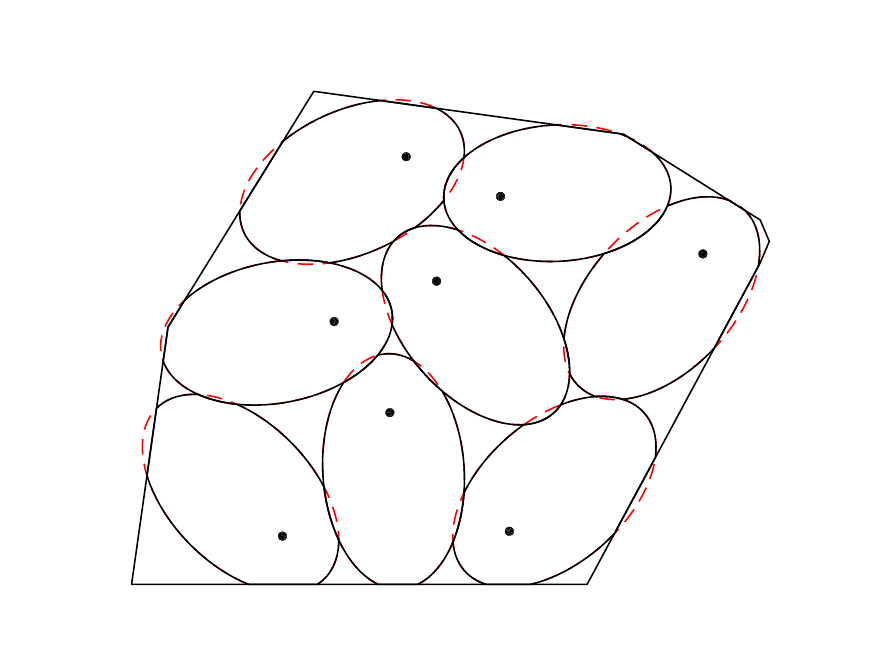}
	\caption{Initial network state [Left], agent trajectories [Center] and final network state [Right] for Case Study I (top), Case Study II (middle) and Case Study III (bottom).}
	\label{fig:states_traj}
\end{figure*}

\begin{figure}[htbp]
	\centering
	\includegraphics[width=0.23\textwidth]{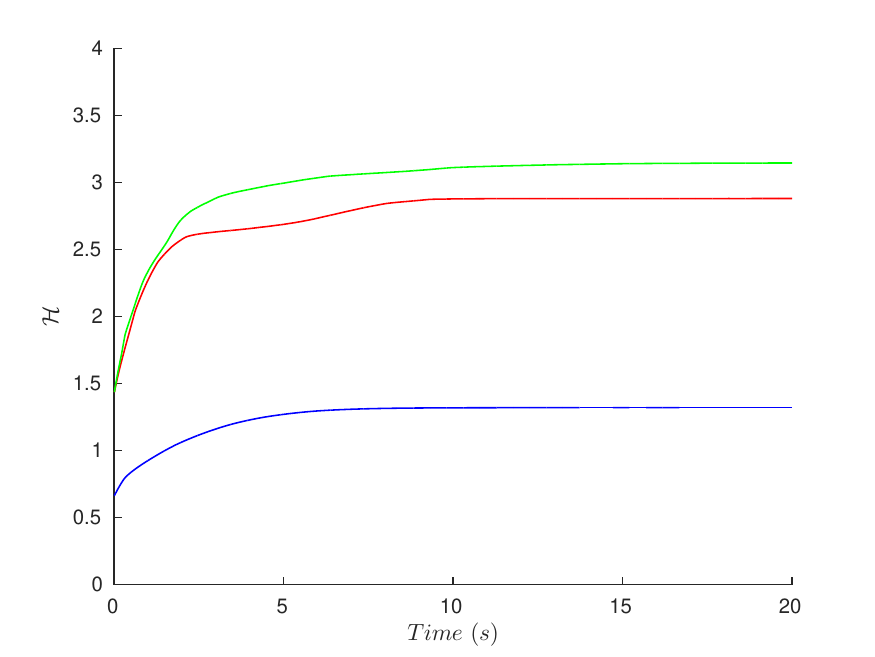}
	\includegraphics[width=0.23\textwidth]{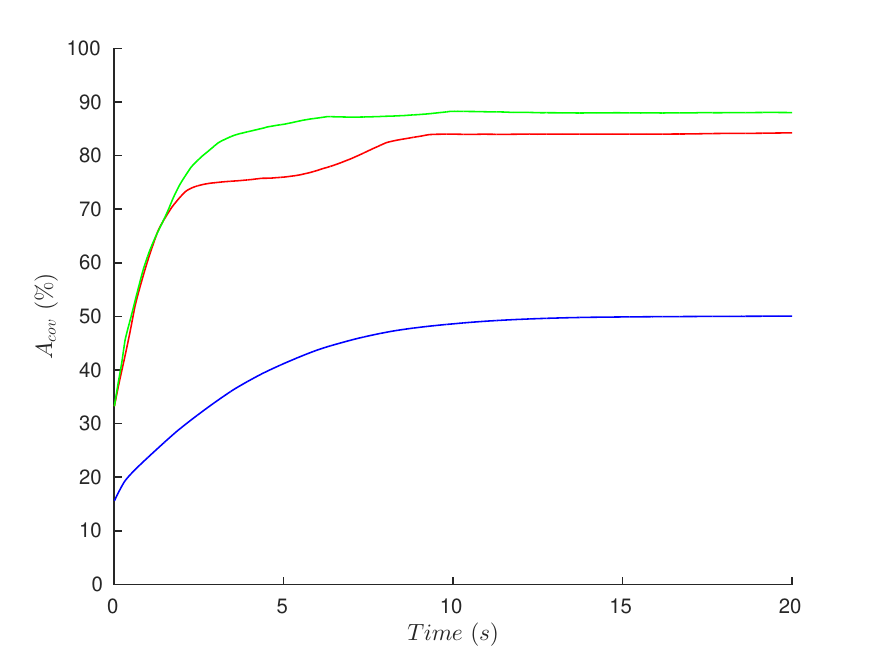}
	\caption{Coverage--quality objective $\mathcal{H}$ [Left] and percentage of $\Omega$ covered [Right].}
	\label{fig:objective_area}
\end{figure}

\section{Conclusions}
Area coverage by a network of MAAs has been examined in this article. Each MAA is equipped with a downwards facing camera with a generic convex sensing pattern. A partitioning of the region sensed by all MAAs is used to assign regions of responsibility to each one of them and based on them a gradient--based control law is derived. The control law maximizes a combined coverage--quality optimization criterion. Simulation studies were provided in order to evaluate the proposed control scheme and compare its performance with that of other simplified control strategies for the same problem.

\bibliographystyle{IEEEtran}
\bibliography{references}
\end{document}